\documentclass[journal,10pt]{IEEEtran}
\usepackage[pdftex]{graphicx}
\usepackage{epstopdf}
\usepackage{multirow}
\usepackage{amsmath}
\usepackage{mathtools}
\usepackage{authblk}
\usepackage{cite}
\usepackage{cleveref}
\usepackage{latexsym}
\usepackage[justification=centering]{caption}
\usepackage{amsthm}
\usepackage{balance}
\usepackage{relsize}
\usepackage[nodisplayskipstretch]{setspace}
\usepackage{color}
\usepackage{amssymb}
\usepackage{eucal}
\usepackage{url}
\usepackage{booktabs,subcaption,amsfonts,dcolumn}
\newtheorem{prop}{Proposition}

\usepackage[nodisplayskipstretch]{setspace}
\usepackage{color}
\newcolumntype{d}[1]{D..{#1}}
\usepackage{lipsum}

\usepackage[nodisplayskipstretch]{setspace}
\usepackage{color}
\definecolor{Mypink}{RGB}{255,0,255}
\definecolor{Myorange}{RGB}{255,102,0}
\definecolor{Mygreen}{RGB}{0,153,0}
\definecolor{Myblue}{RGB}{0,0,255}

\DeclareMathAlphabet\mathbfcal{OMS}{cmsy}{b}{n}
\linespread{1}
\begin{document}

\title{Large Intelligent Surface Assisted Non-Orthogonal Multiple Access: Performance Analysis}

\renewcommand\Authfont{\fontsize{12}{14.4}\selectfont}
\renewcommand\Affilfont{\fontsize{9}{10.8}\itshape}

\author{ Lina~Bariah,~\IEEEmembership{Member,~IEEE,} Sami~Muhaidat,~\IEEEmembership{Senior~Member,~IEEE,} Paschalis C.~Sofotasios,~\IEEEmembership{Senior~Member,~IEEE,} Faissal El Bouanani,~\IEEEmembership{Senior~Member,~IEEE,} Octavia A. Dobre,~\IEEEmembership{Fellow,~IEEE,} and Walaa Hamouda,~\IEEEmembership{Senior~Member,~IEEE,}

				\thanks{L. Bariah and S. Muhaidat are with the KU Center for Cyber-Physical Systems, Department of Electrical and Computer Engineering, Khalifa University, Abu Dhabi, UAE, (e-mails: \{lina.bariah, muhaidat\}@ieee.org.}

				\thanks{P. C. Sofotasios is with the Center for Cyber-Physical Systems, Department of Electrical and Computer Engineering, Khalifa University, Abu Dhabi 127788, UAE, and also with the Department of Electrical Engineering, Tampere University, Tampere 33101, Finland (e-mail: p.sofotasios@ieee.org).}
				
				\thanks{F. El Bouanani is with ENSIAS College of Engineering, Mohammed V University, Rabat, Morocco (e-mail: f.elbouanani@um5s.net.ma).}
				
				\thanks{O. A. Dobre is with the Department of Electrical and Computer Engineer-ing, Memorial University, St. Johns, Canada (e-mail: odobre@mun.ca).}
				
				\thanks{W.  Hamouda  is  with  the  Department  of  Electrical  and Computer Engineering, Concordia University, Montreal, QC, H3G 1M8, Canada (e-mail:hamouda@ece.concordia.ca).}
 
				}
				
\maketitle

\begin{abstract}
\boldmath
Large intelligent surface (LIS) has recently emerged as a potential enabling technology for 6G networks, offering extended coverage and enhanced energy and spectral efficiency. In this work, motivated by its promising potentials, we investigate the error rate performance of LIS-assisted non-orthogonal multiple access (NOMA) networks. Specifically, we consider a downlink NOMA system, in which data transmission between a base station (BS) and $L$ NOMA users is assisted by an LIS comprising $M$ reflective elements. First, we derive the probability density function of the end-to-end wireless fading channels between the BS and NOMA users. Then, by leveraging the obtained results, we derive an approximate expression for the pairwise error probability (PEP) of NOMA users under the assumption of imperfect successive interference cancellation. Furthermore, accurate expressions for the PEP for $M = 1$ and large $M$ values ($M > 10$) are presented in closed-form. To gain further insights into the system performance, an asymptotic expression for the PEP in high signal-to-noise ratio regime, the achievable diversity order, and a tight union bound on the bit error rate are provided. Finally, numerical and simulation results are presented to validate the derived mathematical results.
\end{abstract}
 
\begin{keywords}
Diversity order, large intelligent surfaces (LIS), Meijer's G-function, non-orthogonal multiple access (NOMA), pairwise error probability.
\end{keywords}

\IEEEpeerreviewmaketitle
\section{Introduction}

The ongoing deployment of the fifth-generation (5G) mobile networks aims to realize the vision of ``connected everything,'' enabling distinguished paradigms of wireless communications, including machine-to-people communications, machine-to-machine communications, and people-to-people communications \cite{marco}. These new paradigms are expected to connect a massive number of heterogeneous data-hungry devices with diverse requirements, resulting in a spectrum shortage crisis, in addition to degraded latency and increased energy consumption. This stimulates the call for the development of innovative technologies, which are capable of supporting massive connectivity, extremely high data rates and ultra-low latency.

A promising technology, i.e., non-orthogonal multiple access (NOMA), has emerged as a key enabling technology for 5G networks, which offers reduced latency, enhanced connectivity and reliability, and improved energy and spectral efficiency \cite{Barrieh,Barrieh3}. The principle concept of downlink (DL) NOMA is to employ superposition coding at the base station (BS) by sorting out users based on their channel gains, and then, multiplex their signals in power domain \cite{8357810,7676258}. Interference mitigation in NOMA is carried out by performing successive interference cancellation (SIC) for users' signals with higher power levels \cite{Barrieh2}. On the other hand, users' signals with low power levels are treated as noise. To realize efficient SIC, and hence, efficient NOMA system, paired users should have distinct channel conditions \cite{8823873}. 

Due to the highly stochastic nature of the wireless channels, wireless incident signals are susceptible to several impairments, such as random fluctuation, path loss, blockage, and absorption, yielding uncontrollable multipath fading environments. Such impairments can be overcome by employing multiple-input multiple-output (MIMO) schemes. However, this comes at the expense of reduced energy efficiency, which constitutes a fundamental challenge in the 5G and beyond (5GB) wireless networks \cite{Sousa}. Large intelligent surface (LIS) has recently emerged as a disruptive energy and spectrally efficient technology, which is capable of offering a programmable control over the wireless environment \cite{8796365,6G}. This can be realized by incorporating reflective elements (REs) which manipulate the impinging electromagnetic waves to perform various functionalities, such as wave reflection, refraction, absorption, steering, focusing and polarization \cite{8449754}.

In the context of NOMA, it was shown that LIS can tune the propagation environment to guarantee specific users' order, and hence, enabling efficient deployment of NOMA systems. Furthermore, by efficiently utilizing LIS, it becomes more feasible to change users' order to satisfy a particular system requirement according to certain users priorities, rather than relying on the random propagation environment of wireless systems \cite{Sousa}. Therefore, motivated by the envisioned potentials of LIS, the integration of LIS into NOMA systems has recently received an increasing attention from the research community \cite{8970580,9000593,9024675,2019arXiv191013636M,2019arXiv190703133Y,2019arXiv191007361F}.

\subsection{Related work}

Existing research works have primarily focused on investigating the performance of LIS-assisted NOMA systems, in addition to proposing enhancement schemes to optimize the performance of these systems in terms of outage probability and achievable rate. In \cite{2019arXiv191007361F}, the authors formulated a joint optimization problem to obtain an optimum beamforming vector and phase-shift matrix at the LIS that minimize the total transmission power of a downlink NOMA system. It was demonstrated that LIS can potentially reduce the total transmission power in NOMA systems. The authors in \cite{2019arXiv190906972L} considered an LIS-assisted multi-cluster multiple-input single-output (MISO) NOMA system. They proposed a mathematical framework in order to jointly optimize the beamforming vector at BS and the reflection coefficient vector at LIS. In \cite{2019arXiv190703133Y}, the authors formulated an optimization problem and proposed a solution to maximize users' data rates, while ensuring users' fairness in a downlink LIS-based NOMA system. Transmit beamforming and phase-shift vector optimization have been investigated in the literature under different scenarios and for various setups \cite{8970580,2019arXiv191013636M,9000593,2020arXiv200302117H, Octavia,Octavia2}. The authors in \cite{2019arXiv191210044H} analyzed outage probability, ergodic rate, diversity order, and spectral and energy efficiencies of users with higher priority in an LIS-assisted NOMA scenario while considering direct links between the BS and NOMA users. The reported results in \cite{2019arXiv191210044H} demonstrated that the effect of the direct link is negligible, especially for a large number of REs. In \cite{2020arXiv200209907Y}, the authors investigated the effect of residual interference due to imperfect SIC on the outage probability and ergodic rate performance of LIS-based NOMA systems. The reported results in \cite{2020arXiv200209907Y} highlighted the advantages of incorporating NOMA with LIS, compared to OMA, to enhance the outage probability and ergodic rate performance.

While the performance of LIS-assisted NOMA systems has been analyzed in the literature, a comprehensive error rate analysis is still missing, although it is the most revealing metric about the system performance. The error rate performance of an LIS-based NOMA system was addressed in \cite{2020arXiv200209453C} for the special case of two users. The authors in \cite{2020arXiv200209453C} derived two bit error rate (BER) expressions for the two users under specific modulation schemes. Moreover, the central limit theorem was adopted in \cite{2020arXiv200209453C}, limiting the analytical framework to a large number of REs.

\subsection{Contribution}

Motivated by the above discussion, in this paper, we provide a comprehensive analytical framework to investigate the error rate performance of an LIS-based NOMA system. The proposed work is valid for an arbitrary modulation scheme and any number of users. The main contributions of this paper are summarized as follows:
\begin{itemize}
\item We derive a tight approximate probability density function (PDF) expression for the end-to-end (e2e) wireless channel of the underlying downlink LIS-based NOMA system, with $L$ users and $M$ REs.
\item We derive a novel pairwise error probability (PEP) expression of NOMA users, under the assumption of imperfect SIC. The derived PEP expression is then utilized to obtain a union bound on the error rate.
\item We derive PEP expressions for the special cases of $M = 1$ and large $M$ ($M > 10$), where the latter is obtained by utilizing the central limit theorem.
\item We further analyze the asymptotic PEP for general $M$ values. The derived asymptotic PEP is then used to investigate the achievable diversity order and quantify the effect of the number of REs on the error rate performance of NOMA users.
\end{itemize} 

The rest of the paper is organized as follows. Section \ref{sec:model} introduces the system model of a downlink LIS-based NOMA system. An accurate PEP analysis for an arbitrary number of REs ($M$) is presented in Section \ref{sec:pep}, while the asymptotic PEP analysis and the achievable diversity order are investigated in Section \ref{sec:asy}. Numerical and simulation results are presented in Section \ref{sec:result} and the paper is concluded in Section \ref{sec:con}.

\textit{Notations:} The complex conjugate operation and the absolute value are denoted as $(\cdot )^{\ast}$ and $\left |. \right |$, respectively. $\Re \left\{.\right\}$ represents the real part of a complex number, $\hat{x}$ is a detected symbol, while $\bar{x}$ denotes an erroneously detected symbol. The expectation operator is denoted as $\mathbb{E}\left ( . \right )$. $\CMcal{CN}(\mu,\sigma^2)$ represents the circularly symmetric complex Gaussian (CSCG) distribution with mean $\mu$ and variance $\sigma^2$.

\section{System Model} 
\label{sec:model}

In this work, we consider a downlink LIS-assisted NOMA system comprising a single LIS with $M$ REs to assist in data transmission between a single BS and $L$ NOMA users, $U_{1},\cdots,U_{L}$, as depicted in Fig. \ref{fig:model}. Following the principle of NOMA, the BS broadcasts a superposed message that consists of $L$ messages intended for the $L$ users, multiplexed in power domain. In our work, without loss of generality, we consider that the first user, $U_{1}$, is the farthest user from the LIS, i.e., the user with the weakest link. On the other hand, the $L$th user, $U_{L}$, is the nearest user to the LIS, and therefore, it experiences the strongest channel, i.e., $d_{R,1}>\cdots>d_{R,L}$, where $d_{R,l}$ represents the distance from the LIS to the $l$th user. The power coefficients are allocated to the users based on their locations, in which far users are allocated higher power levels than near users.
\begin{figure}[ht] 
\centering
\includegraphics[width=0.7\linewidth]{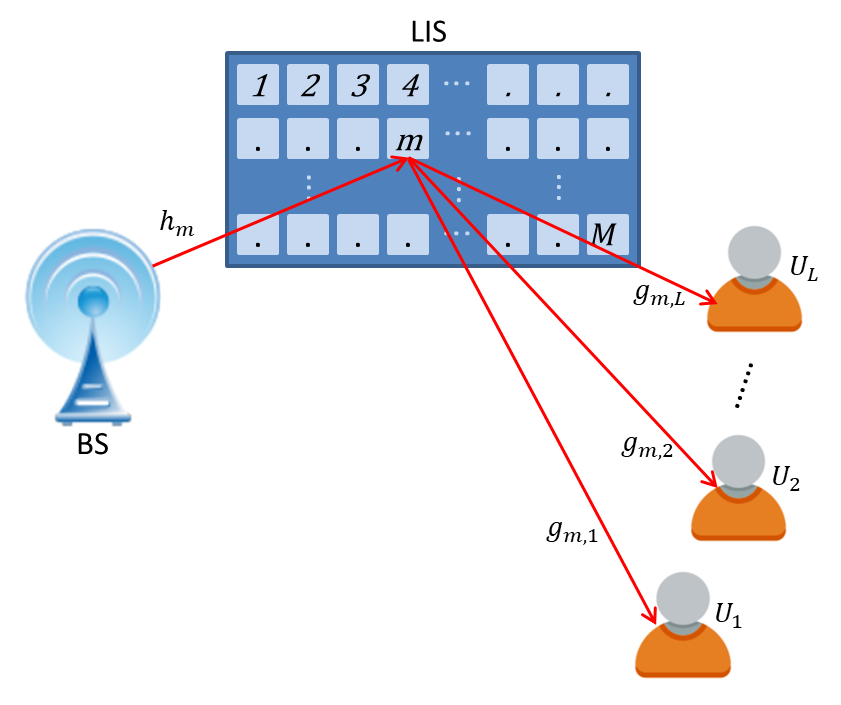}
\caption{LIS-based NOMA system model with $L$ users.}
\label{fig:model}
\end{figure}

The baseband received signal at the $l$th user can be written as \vspace{-0.2cm}
\begin{equation}\vspace{-0.2cm}
\label{eq:r}
r_{l} = \frac{1}{\sqrt{d_{B}^{\alpha}d_{R,l}^{\alpha}}}\sum_{m=1}^{M}h_{m}g_{m,l}w_{m}\textup{e}^{j\theta_{m}}\sum_{i=1}^{L}\sqrt{P_{i}}\,x_{i}+n_{l},
\end{equation} 
where $x_{i}$ and $P_{i}$ are the transmitted symbol and the power coefficient of the $i$th user, respectively, and $\alpha$ accounts for the path-loss exponent. For simplicity, the total transmission power of the BS is normalized to unity. Note that the distance between the BS and the LIS is denoted by $d_{B}$. Moreover, $h_{m}$ represents the small-scale fading channel between the BS and the $m$th RE, while $g_{m,l}$ represents the fading channel between the $m$th RE of the LIS and the $l$th user. Both $h_{m}$ and $g_{m,l}$ are CSCG distributed with zero mean and variance $\sigma^{2}$. The additive white Gaussian noise (AWGN) is represented by $n_{l} \sim \CMcal{CN}\left ( 0,N_{0} \right )$. Additionally, $w_{m}$ represents the amplitude reflection coefficient, which is assumed to be normalized to unity \cite{2019arXiv191013636M}\cite{2019arXiv191211768Z}, while $\theta_{m}$ is the phase shift applied by the $m$th RE. Assuming perfect knowledge of the channels phases, $\theta_{h_m}$ and $\theta_{g_{m,l}}$ of $h_m$ and $g_{m,l}$, respectively, the phase shift of the $m$th RE is chosen as $\theta_{m} = -(\theta_{h_m}+\theta_{g_{m,l}})$. Subsequently, the received signal at the $l$th user can be rewritten as \vspace{-0.2cm}
\begin{equation} \vspace{-0.2cm}
\label{eq:r2}
r_{l} = q_{l}\sum_{i=1}^{L}\sqrt{P_{i}}\,x_{i}+n_{l},
\end{equation} 
where $q_{l}$ denotes the e2e channel coefficient, which is given by \vspace{-0.2cm}
\begin{equation} \vspace{-0.2cm}
\label{eq:q}
q_{l}=\frac{1}{\sqrt{d_{B}^{\alpha}d_{R,l}^{\alpha}}}\sum_{m=1}^{M}\left |h_{m}\right | \left |g_{m,l}\right |.
\end{equation}
Multi-user interference at the $l$th user is canceled by employing SIC for higher power signals, i.e., $x_{1},\cdots,x_{l-1}$, whereas the lower power users' signals, i.e., $x_{l+1},\cdots,x_{L}$, are treated as additive noise. Hence, the output signal of the $l$th SIC receiver can be written as \vspace{-0.2cm}
\begin{equation} \vspace{-0.2cm}
\label{eq:r3}
r'_{l} = q_{l}\left (\sqrt{P_{l}}\,x_{l}+X_{l}  \right )+n_{l},
\end{equation}  
with \vspace{-0.2cm}
\begin{equation} \vspace{-0.2cm}
\label{eq:X}
X_l=\sum_{i=1}^{l-1}\sqrt{P_{i}}\,\delta_{i}+\sum_{j=l+1}^{L}\sqrt{P_{j}}\,x_{j},
\end{equation}  
where $\delta_{i}=x_{i}-\hat{x}_{i}$ denotes the output of the $i$th SIC iteration and $\hat{x}_{i}$ denotes the corresponding detected symbol. Note that $\delta_{i}$ has two scenarios, namely $\delta_{i}=0$ and $\delta_{i} \neq 0$ for successful and unsuccessful SIC, respectively. 
\section{Pairwise Error Probability Analysis} 
\label{sec:pep}
In this section, we derive an accurate expression for the PEP, which offers useful insights into the error rate performance of the considered scenario. Moreover, the derivation of the PEP constitutes the basic block for obtaining an upper bound on the BER. The PEP of the $l$th user is defined as the probability of erroneously decoding symbol $\bar{x}_{l}$ given that symbol $x_{l}$ was transmitted, where $\bar{x}_{l} \neq x_{l}$, $\forall l$. Therefore, based on the maximum-likelihood (ML) rule, the conditional PEP of the $l$th user can be expressed as \vspace{-0.2cm}
\begin{equation} \vspace{-0.2cm}
\label{eq:pep}
\begin{split}
\textup{Pr}\left ( x_{l}\rightarrow \bar{x}_{l} \mid q_{l} \right )= \textup{Pr}\Bigg (   \left | r'_{l}-q_{l}\sqrt{P_{l}}\bar{x}_{l} \right |^{2} \leq \left | r'_{l}-q_{l}\sqrt{P_{l}}x_{l} \right |^{2} \Bigg ).
\end{split}
\end{equation}  
By inserting \eqref{eq:r3} into \eqref{eq:pep}, and after some mathematical simplifications, the PEP can be rewritten as \vspace{-0.2cm}
\begin{equation} \vspace{-0.2cm}
\label{eq:pep1}
\begin{split}
\textup{Pr}\left ( x_{l}\rightarrow \bar{x}_{l} \mid q_{l} \right )= \textup{Pr}\Bigg ( &\left | q_{l}(\sqrt{P_{l}}\bar{\Delta}_{l}+X_{l})+n_{l} \right |^{2} \\
& \;\;\;\;\;\;\;\;\;\;\;\;\;\;\;\;\;\; \leq \left | q_{l}X_{l}+n_{l} \right |^{2} \Bigg ),
\end{split}
\end{equation} 
where $\bar{\Delta}_{l}=x_{l}-\bar{x}_{l}$. 

After expanding \eqref{eq:pep1}, the conditional PEP can be expressed as in \eqref{eq:pep2}, at the top of next page.
\begin{figure*}[ht]
\begin{equation}
\label{eq:pep2}
\textup{Pr}\left ( x_{l}\rightarrow \bar{x}_{l} \mid q_{l} \right )= \textup{Pr}\left ( \underbrace{2\sqrt{P_{l}}\Re \left \{ \bar{\Delta}_{l} n_{l}^{\ast}\right \} }_{N_l}\leq -q_{l}\left (\left |\sqrt{P_{l}}\bar{\Delta}_{l}+X_{l}  \right |^{2}-\left | X_{l} \right |^{2}  \right )  \right ).
\end{equation} 
\hrulefill
\vspace*{4pt}
\end{figure*}
Conditioned on $q_{l}$, the noise term $N_l$ in \eqref{eq:pep2} is modeled as a Gaussian random variable (RV) with zero mean and variance $\sigma_{N}^{2} = 2P_{l}\left |\bar{\Delta}_{l}  \right |^{2} N_{0}$. Hence, the conditional PEP can be evaluated as \vspace{-0.2cm}
\begin{equation} \vspace{-0.2cm}
\label{eq:pep3}
\textup{Pr}\left ( x_{l}\rightarrow \bar{x}_{l} \mid q_{l}=q \right ) = Q\left (  \frac{{q}\vartheta_{l} }{\lambda_{l} } \right ),
\end{equation} 
where $Q(\cdot)$ represents the Gaussian $Q$-function. Also,\vspace{-0.2cm} 
\begin{equation}\vspace{-0.2cm}
\label{eq:v}
\vartheta_{l}=\sqrt{P_{l}}\left |\bar{\Delta}_{l}  \right |^{2}+2\Re \left \{ \bar{\Delta}_{l} X_{l}^{\ast} \right \}
\end{equation}
and\vspace{-0.2cm}
\begin{equation}\vspace{-0.2cm}
\label{eq:lambda}
\lambda_{l}=\left |\bar{\Delta}_{l}  \right |\sqrt{2N_{0}}.
\end{equation}
By applying the Chernoff bound on the $Q$-function in \eqref{eq:pep3}, the conditional PEP can be bounded by \cite{CB}
\begin{equation}
\label{eq:pep4}
\textup{Pr}\left ( x_{l}\rightarrow \bar{x}_{l} \mid q_{l} = q\right ) \leq \textup{exp}\left ( - \frac{q^{2}\vartheta_{l}^{2} }{2\lambda_{l}^{2} } \right ).
\end{equation} 
To obtain the unconditional PEP, we average over the PDF of the total fading coefficient, $q_{l}$.
\begin{prop}
\label{lemma:pdf}
The PDF of $q_{l}$ can be approximated by
\newcounter{tempequationcounter}
 \begin{figure*}[t]
\setcounter{equation}{23}
\begin{equation}\label{m3a}
\mu_{3}=\left\{ \begin{array}{l}
M\pi\left(\frac{9}{2}+6(M-1)+(M-1)(M-2)\frac{\pi^{2}}{8}\right)\sigma^{6},M\geq3\\
21\pi\sigma^{6},M=2\\
\frac{9\pi\sigma^{6}}{2},M=1
\end{array}\right.
\end{equation}
\hrulefill
\vspace*{4pt}
\end{figure*}
\setcounter{equation}{12}
\begin{equation}\label{ccdf1}
f_{q_{l}}(q)\approx a_{1}\sqrt{d_{B}^{\alpha}d_{R,l}^{\alpha}} G_{1,2}^{2,0}\left(\frac{\sqrt{d_{B}^{\alpha}d_{R,l}^{\alpha}} q}{a_{2}} \Bigg| \begin{array}{c}
-;a_{3}\\
a_{4},a_{5};-
\end{array} \right), 
\end{equation} \noindent
\begin{figure*}[ht] 
\setcounter{equation}{24}
\begin{align}\label{mu4}
\mu_{4}=\left\{ \begin{array}{l}
\left(64M+48M(M-1)+9M(M-1)\pi^{2}+6M(M-1)(M-2)\pi^{2}+\frac{M(M-1)(M-2)(M-3)\pi^{4}}{16}\right)\sigma^{8},M\geq4\\
\left(480+90\pi^{2}\right)\sigma^{8},M=3\\
\left(224+18\pi^{2}\right)\sigma^{8},M=2\\
64\sigma^{8},M=1
\end{array}\right.
\end{align}
\hrulefill
\vspace*{4pt}
\end{figure*}
\setcounter{equation}{13}
\noindent where $G^{m,n}_{p,q}\left ( .\mid. \right )$ denotes the standard Meijer's G-function,
\begin{equation}
a_{1}=\frac{\Gamma(a_{3}+1)}{a_{2}\Gamma(a_{4}+1)\Gamma(a_{5}+1)}, \label{a1}
\end{equation}
\begin{equation}
a_{2}=\frac{a_{3}}{2}\left(\varphi_{4}-2\varphi_{3}+\varphi_{2}\right)+2\varphi_{4}-3\varphi_{3}+\varphi_{2}, \label{a2}
\end{equation}
\begin{equation}
a_{3}=\frac{4\varphi_{4}-9\varphi_{3}+6\varphi_{2}-\mu_{1}}{-\varphi_{4}+3\varphi_{3}-3\varphi_{2}+\mu_{1}}, \label{a3}
\end{equation}
\begin{equation}
a_{4}=\frac{a_{6}+a_{7}}{2}, \label{a4}
\end{equation}
\begin{equation} 
a_{5}=\frac{a_{6}-a_{7}}{2}, \label{a5}
\end{equation}
\begin{equation}
a_{6}=\frac{a_{3}\left(\varphi_{2}-\mu_{1}\right)+2\varphi_{2}-\mu_{1}}{a_{2}}-3,
\end{equation}
\begin{equation}
a_{7}=\sqrt{\left(\frac{a_{3}\left(\varphi_{2}-\mu_{1}\right)+2\varphi_{2}-\mu_{1}}{a_{2}}-1\right)^{2}-4\frac{\mu_{1}(a_{3}+1)}{a_{2}}},
\end{equation}
and
\begin{equation}
\varphi_{i}=\frac{\mu_{i}}{\mu_{i-1}},i \geq 1,
\end{equation}
where $\Gamma(\cdot)$ represents the complete Gamma function \cite{IntTable}, and $\mu_{i}$ is the $i$th moment of $q_{l}$ and $\mu_{0}=1$. The first four moments of $q_{l}$ are given by
\begin{equation}\label{mu1}
\mu_{1} =\frac{M\pi\sigma^{2}}{2},
\end{equation}
\begin{equation}\label{m2a}
\mu_{2}  =\left(4+(M-1)\frac{\pi^{2}}{4}\right)M\sigma^{4},
\end{equation}
and, $\mu_3$ and $\mu_{4}$ are given in \eqref{m3a} and \eqref{mu4}, respectively, at the top of this page.
\stepcounter{equation}
\stepcounter{equation}
\end{prop}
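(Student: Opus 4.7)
The density $f_{q_{l}}$ is a scaled version of the PDF of $T=\sum_{m=1}^{M}|h_{m}||g_{m,l}|$, an $M$-fold sum of independent double-Rayleigh products that admits no tractable closed form for general $M$. My plan is to construct a flexible parametric surrogate whose Mellin transform is analytic in its parameters, and to pin those parameters down by matching the first four moments of $T$. The Meijer family $G_{1,2}^{2,0}(\cdot\,|\,a_{3};a_{4},a_{5})$ in \eqref{ccdf1} is a natural choice because its Mellin transform is the short Gamma-ratio $\Gamma(a_{4}+s)\Gamma(a_{5}+s)/\Gamma(a_{3}+s)$, so its integer moments are available in closed form; together with the scale $a_{2}$ this supplies exactly four shape/scale degrees of freedom to encode four moment constraints, while normalization $\int f=1$ fixes the fifth parameter $a_{1}$.

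Step 1 is to derive $\mu_{1},\ldots,\mu_{4}$. Since $h_{m}$ and $g_{m,l}$ are independent circularly symmetric complex Gaussians with parameter $\sigma^{2}$, their magnitudes are Rayleigh with $\mathbb{E}[|h_{m}|^{k}]=\sigma^{k}2^{k/2}\Gamma(1+k/2)$. Expanding $T^{k}$ via the multinomial theorem and using independence across $m$ reduces each $\mu_{k}$ to a weighted sum, indexed by compositions $(k_{1},\ldots,k_{M})$ of $k$, of products of these per-element moments. For $k\in\{1,2\}$ this enumeration is immediate and yields \eqref{mu1}--\eqref{m2a}. For $k=3,4$ the piecewise-in-$M$ forms in \eqref{m3a} and \eqref{mu4} arise because compositions with three or four distinct nonzero coordinates can only contribute when $M$ is large enough to host them, which forces case splits at $M=3$ and $M=4$.

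Step 2 is to match the surrogate's moments to those of $T$. Using the Mellin-transform identity for $G_{1,2}^{2,0}$ with the substitution $z=\sqrt{d_{B}^{\alpha}d_{R,l}^{\alpha}}\,q/a_{2}$, the integer moments of $q_{l}$ under \eqref{ccdf1} are obtained in closed form; enforcing $\mu_{0}=1$ yields $a_{1}$ as in \eqref{a1}, while for $n\geq 1$ the ratios
\[
\varphi_{n}=\frac{\mu_{n}}{\mu_{n-1}}=a_{2}\,\frac{(a_{4}+n)(a_{5}+n)}{a_{3}+n}
\]
form four equations in the four unknowns $(a_{2},a_{3},a_{4},a_{5})$.

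Step 3 is to solve this system in closed form. Exploiting the symmetry of $a_{4}$ and $a_{5}$, I would form linear combinations of $\varphi_{1},\ldots,\varphi_{4}$ that eliminate the product $a_{4}a_{5}$, extracting $a_{3}$ as in \eqref{a3} and then $a_{2}$ as in \eqref{a2}; substituting back makes $a_{4}+a_{5}$ and $a_{4}a_{5}$ linear in the remaining $\varphi_{i}$'s, so that $a_{4}$ and $a_{5}$ emerge as the two roots of a quadratic whose half-sum and half-difference are \eqref{a4}--\eqref{a7}. The main obstacle is the combinatorial bookkeeping of $\mu_{3}$ and $\mu_{4}$, which is the source of the $M$-dependent case splits; once those moments are in hand, inverting the moment-matching system is essentially elementary algebra guided by the $(a_{4},a_{5})$-symmetry.
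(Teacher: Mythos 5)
Your proposal is correct and follows essentially the same route as the paper: both approximate the sum of double-Rayleigh (Fox H-type) variates by a moment-matched Meijer $G^{2,0}_{1,2}$ density, using exactly the same first four moments obtained from the Rayleigh moments via the multinomial expansion, with the same case splits at $M=3$ and $M=4$. The only difference is presentational: you derive and solve the matching equations $\varphi_{n}=a_{2}\frac{(a_{4}+n)(a_{5}+n)}{a_{3}+n}$ explicitly from the Mellin transform (which indeed reproduces the closed-form parameters $a_{1},\dots,a_{7}$ in \eqref{a1}--\eqref{a5}), whereas the paper obtains them by invoking the moment-based density approximant method for sums of H-variates from \cite{Faissal}.
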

\begin{proof}
The derivation of the PDF and corresponding moments are provided in Appendix I.
\end{proof}
Therefore, using the derived PDF jointly with \eqref{eq:pep4}, the unconditional PEP can be tightly approximated by \eqref{eq:pep5}, which is given at the top of next page.
\begin{figure*}[ht]
\begin{equation}
\label{eq:pep5}
\textup{Pr}\left ( x_{l}\rightarrow \bar{x}_{l} \right ) \approx a_{1}\sqrt{d_{B}^{\alpha}d_{R,l}^{\alpha}}\int_{0}^{\infty}\textup{exp}\left ( - \frac{q^{2}\vartheta_{l}^{2} }{2\lambda_{l}^{2} } \right ) G_{1,2}^{2,0}\left(\frac{\sqrt{d_{B}^{\alpha}d_{R,l}^{\alpha}} q}{a_{2}} \Bigg| \begin{array}{c}
-;a_{3}\\
a_{4},a_{5};-
\end{array} \right)\, dq. 
\end{equation} 
\hrulefill
\vspace*{4pt}
\end{figure*}
Utilizing the following Meijer's G-function representation of the exponential function \cite[Eq. (8.4.3.1)]{int22}
\begin{equation}
\label{eq:exp}
\textup{exp}(-z)=G^{1,0}_{0,1}\left(z\left\vert \begin{array}{c}
-;-\\
0;-
\end{array}\right.\right),
\end{equation} 
the integral in \eqref{eq:pep5} can be rewritten as in \eqref{eq:pep6}, at the top of next page.
\begin{figure*}[ht]
\begin{equation}
\label{eq:pep6}
\textup{Pr}\left ( x_{l}\rightarrow \bar{x}_{l} \right ) \approx a_{1}\sqrt{d_{B}^{\alpha}d_{R,l}^{\alpha}}\int_{0}^{\infty}G^{1,0}_{0,1}\left(\frac{q^{2}\vartheta_{l}^{2} }{2\lambda_{l}^{2} }\left\vert \begin{array}{c}
-;-\\
0;-
\end{array}\right.\right) G_{1,2}^{2,0}\left(\frac{\sqrt{d_{B}^{\alpha}d_{R,l}^{\alpha}} q}{a_{2}} \Bigg| \begin{array}{c}
-;a_{3}\\
a_{4},a_{5};-
\end{array}\right)\, dq. 
\end{equation} 
\hrulefill
\vspace*{4pt}
\end{figure*}
The integral in \eqref{eq:pep6} can be solved using \cite[Eq. (2.24.1.1)]{int22}. Hence, the unconditional PEP can be evaluated as in \eqref{eq:pep7}, given at the top of next page, where
\begin{figure*}[ht]
\begin{equation}
\label{eq:pep7}
\textup{Pr}\left ( x_{l}\rightarrow \bar{x}_{l} \right ) \approx \frac{a_{1}a_{2}2^{a_{4}+a_{5}-a_{3}}}{\sqrt{\pi}}\, G^{1,4}_{4,3}\left(\zeta_{l} \Bigg| \begin{array}{c}
\frac{-a_{4}}{2},\frac{1-a_{4}}{2},\frac{-a_{5}}{2},\frac{1-a_{5}}{2};-\\
0;\frac{-a_{3}}{2},\frac{1-a_{3}}{2}
\end{array}\right).
\end{equation} 
\hrulefill
\vspace*{4pt}
\end{figure*}
\begin{equation}
\zeta_{l} = \frac{2a_{2}^{2}\vartheta_{l}^{2}}{d_{B}^{\alpha}d_{R,l}^{\alpha}\lambda_{l}^{2}}.
\end{equation}
In the following, to obtain insights into the system performance, we consider the following two special cases, in which the corresponding PEP expressions are derived. 

\noindent \textbf{Case 1 ($M = 1$):}
For $M=1$, the PEP of the $l$th user can be upper bounded by
\begin{equation}
\label{eq:pepm2}
\textup{Pr}\left ( x_{l}\rightarrow \bar{x}_{l} \right ) \leq \eta_l \, \textup{exp} \left ( \eta_l \right )\textup{E}_{1}\left ( \eta_l \right ),
\end{equation} 
where $\textup{E}_{1}\left ( z \right )=\int_{z}^{\infty}(e^{-t}/t)\, dt $ represents the exponential integral function, and
\begin{equation}
\label{eq:eta}
\eta_l = \frac{\lambda_{l}^{2}d_{B}^{\alpha}d_{R,l}^{\alpha}}{2\sigma^{4}\vartheta_{l}^{2}}.
\end{equation} 
\begin{proof}
For the special case of $M=1$, the fading coefficient $q_{l}$ reduces to 
\begin{equation}
\label{eq:q1}
(q_{l})_{M=1} \rightarrow \grave{q}_{l}=\frac{\left |h_{1}\right | \left |g_{1,l}\right |}{\sqrt{d_{B}^{\alpha}d_{R,l}^{\alpha}}},
\end{equation} 
which is a double Rayleigh RV with PDF given by \cite{DR} 
\begin{equation}
f_{\grave{q}_{l}}\left(q\right)=\frac{q\sqrt{d_{B}^{\alpha}d_{R,l}^{\alpha}}}{\sigma^{4}}K_{0}\left(\frac{q\sqrt{d_{B}^{\alpha}d_{R,l}^{\alpha}}}{\sigma^{2}}\right), \label{eq:PDFDR1}
\end{equation}
where $K_{0}(\cdot)$ represents the zero-order modified Bessel function of the second kind. Subsequently, using \eqref{eq:pep4} and \eqref{eq:PDFDR1}, the unconditional PEP for $M=1$ can be evaluated as follows
\begin{equation}
\label{eq:pepm1}
\begin{split}
\textup{Pr}\left ( x_{l}\rightarrow \bar{x}_{l} \right ) \leq & \frac{\sqrt{d_{B}^{\alpha}d_{R,l}^{\alpha}}}{\sigma^4}\int_{0}^{\infty}q\,\textup{exp}\left ( - \frac{q^{2}\vartheta_{l}^{2} }{2\lambda_{l}^{2} } \right )\\
& \times K_{0}\left(\frac{q\sqrt{d_{B}^{\alpha}d_{R,l}^{\alpha}}}{\sigma^{2}}\right)  dq. 
\end{split}
\end{equation} 
Finally, the integral in \eqref{eq:pepm1} can be solved using \cite[Eq. (2.16.8.6)]{int2}, yielding \eqref{eq:pepm2}.
\end{proof}

\noindent \textbf{Case 2 ($M > 10$):}
For a sufficiently large number of REs, the PEP of the $l$th user can be evaluated as
\begin{equation}
\label{eq:pepm4}
\begin{split}
\textup{Pr}\left ( x_{l}\rightarrow \bar{x}_{l} \right ) \leq \sqrt{\frac{d_{B}^{\alpha}d_{R,l}^{\alpha}}{8\xi_{l}\bar{\sigma}_{l}^{2}}}  &\, \textup{exp}\left (  \frac{d_{B}^{\alpha}d_{R,l}^{\alpha}\bar{\mu}_{l}^{2}}{4\xi_{l}\bar{\sigma}_{l}^{4}} -\frac{\bar{\mu}_{l}^{2}}{2\bar{\sigma}_{l}^{2}} \right ) \\
& \times \left [1+\textup{erf}\left ( \sqrt{\frac{d_{B}^{\alpha}d_{R,l}^{\alpha}\bar{\mu}_{l}^{2}}{4\xi_{l}\bar{\sigma}_{l}^{4}}} \right )  \right ], 
\end{split}
\end{equation} 
where
\begin{equation}
\label{eq:xi}
\xi_{l} = \frac{\vartheta_{l}^{2}\bar{\sigma}_{l}^{2}+d_{B}^{\alpha}d_{R,l}^{\alpha}\lambda_{l}^{2} }{2\bar{\sigma}_{l}^{2}\lambda_{l}^{2}},
\end{equation}
\begin{equation}\label{nmu}
\bar{\mu}_{l}= \frac{M\pi \sigma^{2}}{2},
\end{equation} 
\begin{equation}\label{nvar}
\bar{\sigma}_{l}^{2}= \frac{M\sigma^{2}\left ( 16-\pi^2 \right )}{4},
\end{equation} 
and $\textup{erf}\left ( \cdot \right )$ denotes the error function.
\begin{proof}
For a sufficiently large number of REs, $M > 10$, according to the central limit theorem \cite{probability}, the e2e cascaded fading coefficient $q_{l}$ in \eqref{eq:q} reduces to a normally distributed RV, $(q_{l})_{M > 10} \rightarrow \tilde{q}_{l} \sim \CMcal{N}\left ( \bar{\mu}_{l},\bar{\sigma}_{l} \right )$ with mean and variance obtained from \eqref{nmu} and \eqref{nvar}, respectively. 

Hence, the PDF of $\tilde{q}_{l}$ can be expressed as the following
\begin{equation} \label{eq:barpdf}
f_{\tilde{q}_{l}}(q)= \frac{1}{\sqrt{2 \pi \bar{\sigma}_{l}^{2}}} \, \textup{exp}\left ( -\frac{\left ( q-\bar{\mu}_{l} \right )^{2}}{2 \bar{\sigma}_{l}^2} \right ).
\end{equation} 
Accordingly, for the case of a large number of REs, the PEP can be evaluated by averaging \eqref{eq:pep4} over $\tilde{q}$, which has the PDF given by \eqref{eq:barpdf}. After some mathematical manipulations, the unconditional PEP can be written as follows
\begin{equation}
\label{eq:pepm3}
\begin{split}
\textup{Pr}\left ( x_{l}\rightarrow \bar{x}_{l} \right ) \leq \frac{\textup{e}^{-\frac{\bar{\mu}_{l}^{2}}{2\bar{\sigma}_{l}^{2}}}}{\sqrt{2 \pi \bar{\sigma}_{l}^{2}}}\int_{0}^{\infty}\textup{e}^{\left ( - \frac{q^{2}\left (\vartheta_{l}^{2}\bar{\sigma}_{l}^{2}+\lambda_{l}^{2} \right )}{2\bar{\sigma}_{l}^{2}\lambda_{l}^{2} }+\frac{ \bar{\mu}_{l}q }{\bar{\sigma}_{l}^2} \right )}   dq.  
\end{split}
\end{equation} 
The integral in \eqref{eq:pepm3} can be evaluated using \cite[Eq. (3.322.2)]{IntTable}. That is, \eqref{eq:pepm4} is obtained, which concludes the proof.
\end{proof}
\noindent \underline{\textbf{Union Bound on the BER Performance:}}

The derived PEP in \eqref{eq:pep7} is leveraged to compute an accurate upper bound on the average error probability of NOMA users. Note that the BER union bound is defined as the weighted sum of all PEP values, considering all transmitted and detected symbols scenarios for all users, including perfect and imperfect SIC \cite{tellambura}. Hence, utilizing the derived PEP expression in \eqref{eq:pep7}, the BER union bound can be written as \cite[Eq. (4.2–70)]{proakis}
\begin{equation}
P_{e}=\frac{1}{\tau}\sum_{x_{l}}^{ }\sum_{\underset{\bar{x}_{l} \neq x_{l}}{\bar{x}_{l}} }^{ }\textup{Pr}\left ( x_{l}\rightarrow \bar{x}_{l} \right ),
\end{equation}
where $\tau$ denotes the number of possible combinations of $x_{l}$ and $\bar{x}_{l}$, $\forall l$. 

\section{Asymptotic PEP analysis and achievable diversity order}
\label{sec:asy}

The PEP has been widely adopted as an efficient metric to quantify the achievable diversity order, which is defined as the slope of the PEP at high signal-to-noise ratio (SNR) regime. The achievable diversity order of the $l$th user can be evaluated as
\begin{equation}
\label{eq:ds}
d_{s} = -\lim_{\bar{\gamma} \rightarrow \infty} \frac{\textup{log}(\textup{Pr}\left ( x_{l}\rightarrow \bar{x}_{l} \right ))}{\textup{log}\bar{\gamma}},
\end{equation}
where $\bar{\gamma}$ is the average transmit SNR of the $l$th user. To obtain the achievable diversity order, we first exploit the asymptotic expansion of the Meijer's G-function from \cite[Eq. (41)]{asymp} to approximate \eqref{eq:pep7} when $\bar{\gamma} \rightarrow \infty$. Note that as $\bar{\gamma} \rightarrow \infty$, the argument of the Meijer's G-function in \eqref{eq:pep7} approaches infinity. Therefore, utilizing the asymptotic expansion of the Meijer's G-function in \cite[Eq. (41)]{asymp}, the PEP of the $l$th user can be asymptotically given as in \eqref{eq:pep_asy}, at the top of next page.  
\begin{figure*}[ht]
\begin{equation}
\label{eq:pep_asy}
\begin{split}
\textup{Pr}\left ( x_{l}\rightarrow \bar{x}_{l} \right ) \sim  a_{1}a_{2}2^{a_{4}+a_{5}-a_{3}} \Bigg [ &-\zeta_{l}^{-\left (\frac{a_4}{2}+1  \right )}\frac{2\Gamma\left ( -\frac{a_{4}-a_{5}}{2} \right )\Gamma\left ( -\frac{a_{4}-a_{5}+1}{2} \right )\Gamma\left ( 1+\frac{a_{4}}{2} \right )}{\Gamma\left ( -\frac{a_{4}-a_{3}}{2} \right )\Gamma\left ( -\frac{a_{4}-a_{3}+1}{2} \right )} \\
& +  \zeta_{l}^{-\left (\frac{a_4}{2}+\frac{1}{2}  \right )}\frac{\Gamma\left ( \frac{1-a_{4}+a_{5}}{2} \right )\Gamma\left ( -\frac{a_{4}-a_{5}}{2} \right )\Gamma\left ( \frac{1+a_{4}}{2} \right )}{\Gamma\left ( \frac{1-a_{4}+a_{3}}{2} \right )\Gamma\left ( -\frac{a_{4}-a_{3}}{2} \right )} \\
& - \zeta_{l}^{-\left (\frac{a_5}{2}+1  \right )}\frac{2\Gamma\left ( -\frac{a_{5}-a_{4}}{2} \right )\Gamma\left ( -\frac{a_{5}-a_{4}+1}{2} \right )\Gamma\left ( 1+\frac{a_{5}}{2} \right )}{\Gamma\left ( -\frac{a_{5}-a_{3}}{2} \right )\Gamma\left ( -\frac{a_{5}-a_{3}+1}{2} \right )} \\
& + \zeta_{l}^{-\left (\frac{a_5}{2}+\frac{1}{2}  \right )}\frac{\Gamma\left ( \frac{1-a_{5}+a_{4}}{2} \right )\Gamma\left ( -\frac{a_{5}-a_{4}}{2} \right )\Gamma\left ( \frac{1+a_{5}}{2} \right )}{\Gamma\left ( \frac{1-a_{5}+a_{3}}{2} \right )\Gamma\left ( -\frac{a_{5}-a_{3}}{2} \right )} \Bigg ]. 
\end{split}
\end{equation} 
\hrulefill
\vspace*{4pt}
\end{figure*}
To evaluate the achievable diversity order, we substitute \eqref{eq:pep_asy} in \eqref{eq:ds} and evaluate the logarithmic function. By considering the dominant components that contribute to the diversity order, the achievable diversity order of the $l$th user can be evaluated as
\begin{equation}
\label{eq:ds2}
\begin{split}
d_{s} &= -\lim_{\bar{\gamma} \rightarrow \infty}\\
&\left (\frac{\textup{log}(\bar{\gamma}^{-\left (\frac{a_4}{2}+1  \right )}+\bar{\gamma}^{-\left (\frac{a_4}{2}+\frac{1}{2}  \right )}+\bar{\gamma}^{-\left (\frac{a_5}{2}+1  \right )}+\bar{\gamma}^{-\left (\frac{a_5}{2}+\frac{1}{2}  \right )})}{\textup{log}(\bar{\gamma})}\right )\\
& = \textup{min}\left ( \frac{a_4}{2}+\frac{1}{2} ,\frac{a_5}{2}+\frac{1}{2} \right ),
\end{split}
\end{equation}
where $a_4$ and $a_5$ are given in \eqref{a4} and \eqref{a5}, respectively. Fig. \ref{fig:a4_a5} shows that $a_5 < a_4$ for all $M$ values. Moreover, it can be noticed from Fig. \ref{fig:a4_a5} that $a_{5}$ is a function of the number of REs, $M$; this indicates that the achievable diversity order of NOMA users is dependent on the number of REs ($M$).  
\begin{figure}[ht] 
\centering
\includegraphics[width=0.9\linewidth]{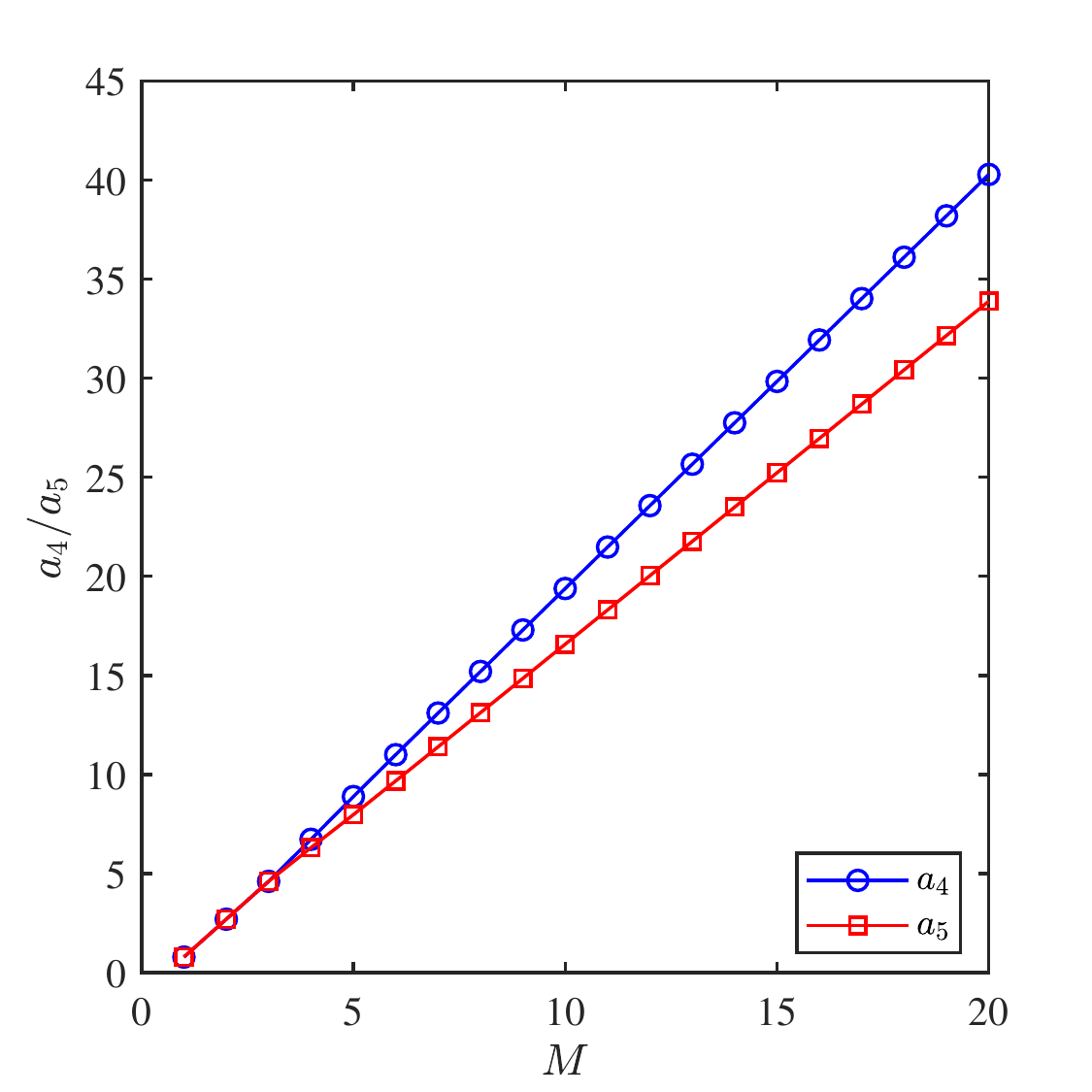}
\caption{Number of REs $M$ vs. $a_4$ and $a_5$.}
\label{fig:a4_a5}
\end{figure}
\begin{figure}[ht] 
\centering
\includegraphics[width=0.9\linewidth]{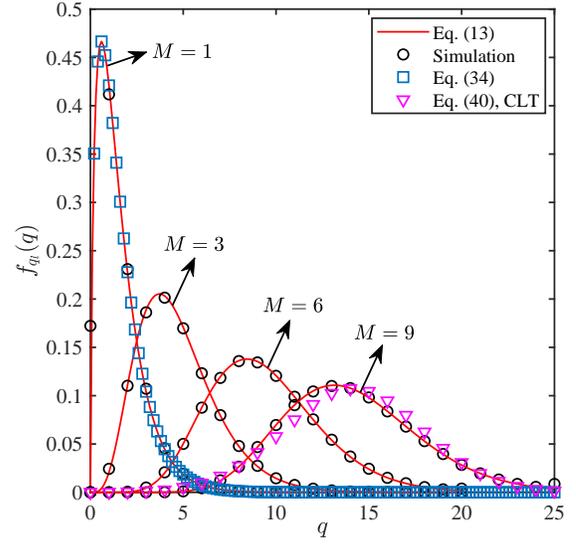}
\caption{Analytical and simulated PDF of $q_{l}$ for different $M$ values.}
\label{fig:pdf}
\end{figure}
\begin{figure}[ht] 
\centering
\includegraphics[width=0.9\linewidth]{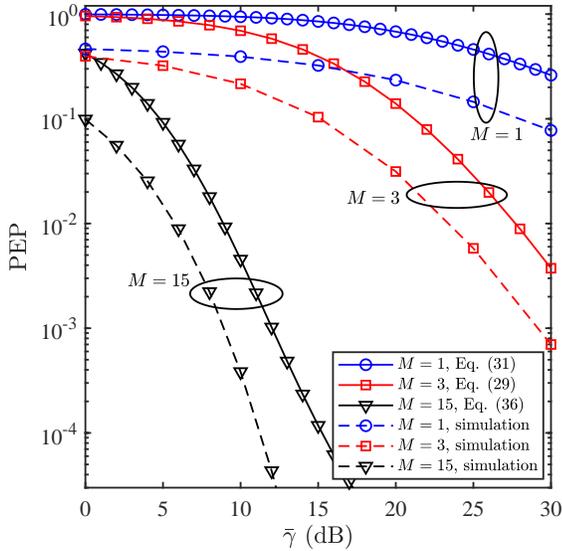}
\caption{PEP of the first NOMA user, $U_{1}$, with different numbers of REs.}
\label{fig:PEP}
\end{figure}
\begin{figure}[ht] 
\centering
\includegraphics[width=0.9\linewidth]{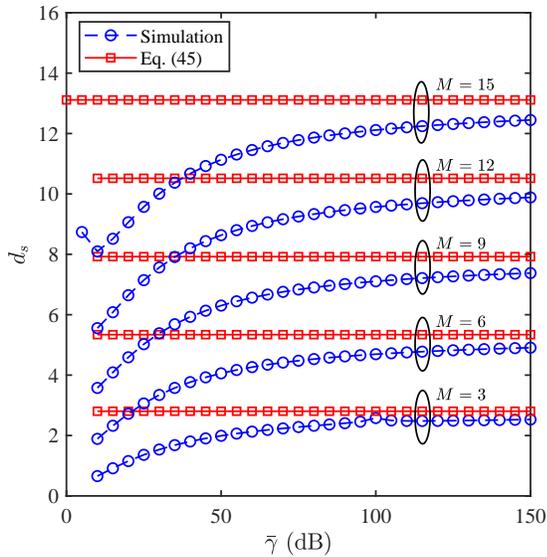}
\caption{Achievable diversity order of the two users for various values of $M$.}
\label{fig:ds}
\end{figure}
\begin{figure}[ht] 
\centering
\includegraphics[width=0.9\linewidth]{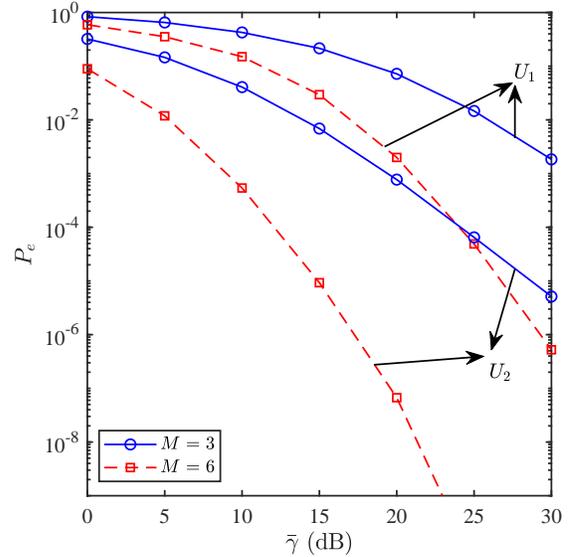}
\caption{BER union bound of the two users versus $\bar{\gamma}$ for different number of REs.}
\label{fig:UB}
\end{figure}
\begin{figure}[ht] 
\centering
\includegraphics[width=0.9\linewidth]{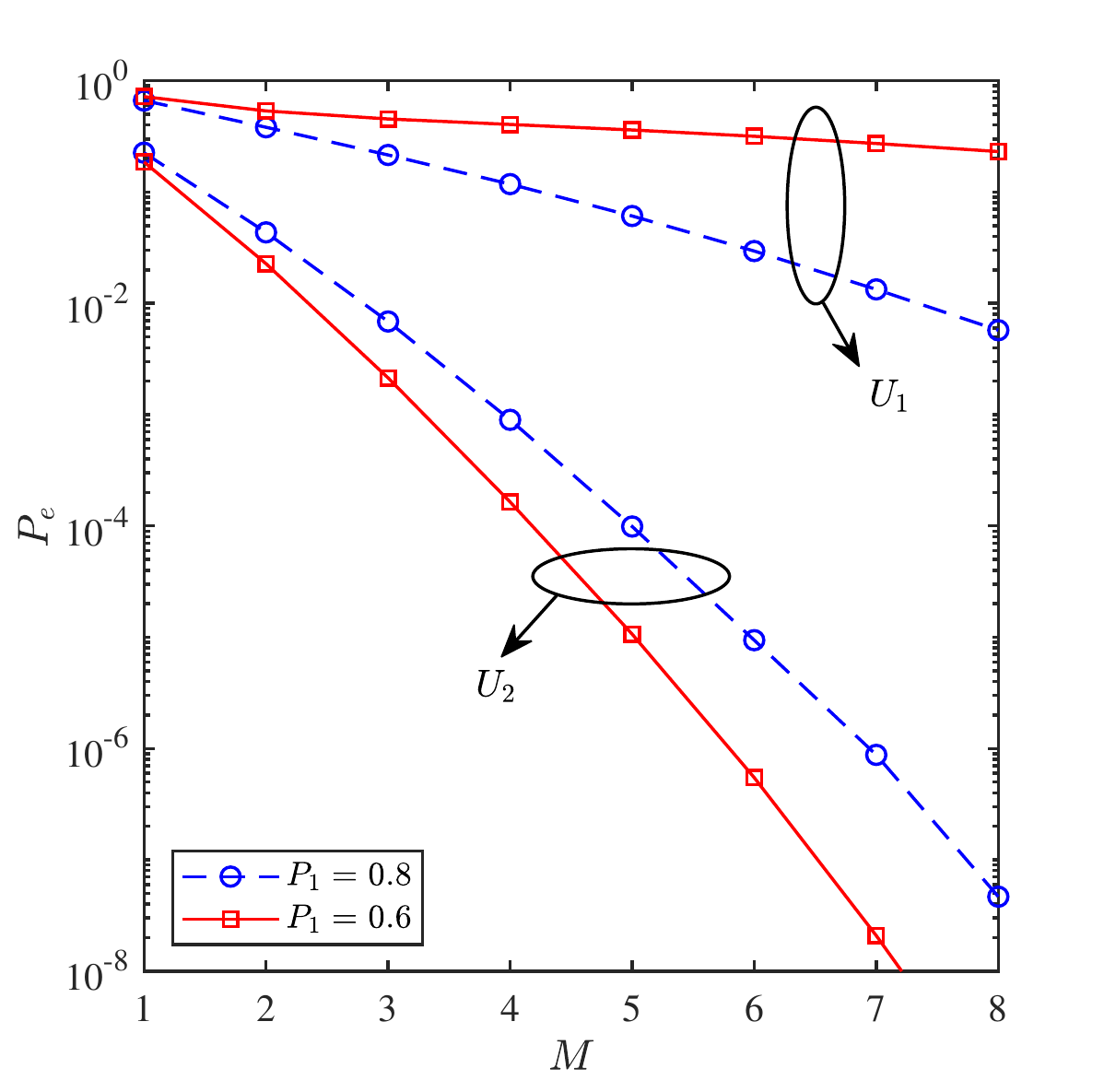}
\caption{BER union bound of the two users versus $M$ for different power allocation coefficients, and $\bar{\gamma}=15$ dB.}
\label{fig:UBN}
\end{figure}

\section{Numerical and Simulation Results}
\label{sec:result}

In this section, we present analytical and Monte Carlo simulation results to corroborate the derived mathematical framework and present insightful conclusions into the performance of LIS-assisted NOMA systems. In particular, we consider a downlink NOMA system with a single BS, a single LIS with $M$ REs and two users, i.e., $L=2$. Note that users clustering is used for $L>3$. Unless mentioned otherwise, we assume that the first user, $U_{1}$, is the far user located at distance $d_{R,1} = 5$ from the LIS, while the second user, $U_{2}$, is the near user located at distance $d_{R,2}=2$ from the LIS. Without loss of generality, the distance between the BS and the LIS is normalized to unity, i.e., $d_{B} =1$. The power coefficients are allocated as $P_{1} = 0.8$ and $P_{2}=0.2$ \cite{Barrieh}, and the transmitted and detected symbols of all users are selected randomly from a binary phase shift keying (BPSK) constellation. Moreover, the path-loss exponent is $\alpha =3$ \cite{Ding}.

Fig. \ref{fig:pdf} shows the analytical and simulated PDF of the e2e fading coefficient, $q_{l}$, for $M=1,3,6,$ and 9. The results in Fig. \ref{fig:pdf} confirm the accuracy of the estimated PDF in Proposition \ref{lemma:pdf}, and one can observe that the analytical and simulation results perfectly match. Additionally, Fig. \ref{fig:pdf} corroborates the accuracy of the derived PDF in Proposition \ref{lemma:pdf} for the special cases of $M=1$ and large $M$. Specifically, it can be noticed that \eqref{ccdf1} perfectly matches \eqref{eq:PDFDR1} for $M=1$. Likewise, for larger values of $M$, it is observed that \eqref{eq:barpdf}, derived based on the CLT, closely matches \eqref{ccdf1}. Finally, it is noticed that as $M$ increases, the mean value of $q_{l}$ increases, resulting in a better average error rate performance.

Fig. \ref{fig:PEP} depicts the PEP performance of the first user, where the analytical and simulation results are presented for $M=1, 3$ and $15$. The simulation results validate the accuracy of the derived expressions in \eqref{eq:pep7}, \eqref{eq:pepm2}, and \eqref{eq:pepm4}, where it can be observed that the analytical results provide a tight upper bound on the PEP performance over the entire SNR range. Specifically, one can notice that both analytical and simulation results experience the same diversity order, which further confirms the validity of the derived mathematical framework.
 
The observation in \eqref{eq:ds2} is further validated in Fig. \ref{fig:ds}, where the achievable diversity order of the first user is presented for $M=3,6,9,12$ and 15. For the sake of clarity, the results of the second user are omitted from Fig. \ref{fig:ds}, as they are identical to those of the first user. The results in Fig. \ref{fig:ds} show that the achievable diversity order in the LIS-based NOMA system converges to $\frac{a_5}{2}+\frac{1}{2}$, where $a_5$ is a function of $M$, as depicted in Fig. \ref{fig:a4_a5}. This observation is verified by Monte Carlo simulations. Fig. \ref{fig:ds} highlights the prominent potential of LIS in offering notable enhancement into the error rate performance of NOMA systems, as it is observed that the achievable diversity order of the underlying system model is primarily dependent on the number of REs, $M$.  

The BER union bound of the underlying LIS-assisted NOMA system with two users scenario is shown in Fig. \ref{fig:UB}, for $M=3$ and $M=6$. The results confirm the advantages of utilizing LIS to enhance the error rate performance of NOMA. Specifically, in addition to the extended coverage, the remarkably enhanced error rate performance can be a key driver behind the integration of LIS into NOMA systems. For example, it can observed from Fig. \ref{fig:UB} that as $M$ changes from 3 to 6, the performances of the first and second users are enhanced by approximately 9.8 dB and 10 dB, respectively, at $P_{e} \approx 1.8 \times 10^{-3}$. Moreover, it can be noticed from Fig. \ref{fig:UB} that both users experience the same diversity order.  

The effect of the power coefficients and the number of REs is further investigated in Fig. \ref{fig:UBN}, where the BER union bound versus $M$ is presented for $P_{1}=0.8$ and 0.6, at $\bar{\gamma}=15$ dB. Note that $P_{1}>P_{2}$ and $P_{1}+P_{2} = 1$. It can be noticed from Fig. \ref{fig:UBN} that the first user is more susceptible to changes in the power allocation, over different $M$ values. This stems from the fact that the first user does not perform SIC, and hence, changes in the power allocation result in variations in the interference level caused by the second user.

\section{Conclusion}
\label{sec:con}

In this paper, we proposed a comprehensive mathematical framework to investigate the error rate performance of an LIS-assisted NOMA system. Specifically, we derived the PDF of the e2e fading channel, which is then utilized to obtain an accurate PEP expression of NOMA users. As special cases, we derived novel PEP expressions for a single and large number of REs, where for the latter we adopted the central limit theorem to obtain an alternative PDF for the e2e fading channel. The derived PEP expression was then exploited to evaluate a tight union bound on the BER. Furthermore, we evaluated the asymptotic PEP of the $l$th user, which was then used to quantify the achievable diversity order. The derived analytical results of the considered scenario, validated by Monte Carlo simulations, showed that the achievable diversity order of LIS-based NOMA systems depends on the number of REs ($M$). Finally, the obtained results highlighted the advantages of utilizing LIS to improve the diversity order of NOMA users, compared to the conventional amplify-and-forward relaying, which limits the achievable diversity order of NOMA users to unity \cite{Barrieh3}.

\appendix
In this appendix, we provide the proof of Proposition \ref{lemma:pdf} and derive the first four moments required for the derivation of the PDF of $q_{l}$.
\subsection{The derivation of the approximate PDF $f_{q_{l}}(q)$}
By recalling that both $\left |h_{m}\right |$ and  $\left |g_{m,l}\right |$, $\forall m=1,\cdots,M$ are independent and identically Rayleigh distributed with zero mean and variance $2 \sigma^{2}$, the PDF of $\beta_{m}^{(l)} = \left |h_{m}\right | \left |g_{m,l}\right |$, which follows the double Rayleigh distribution, is given as the following \cite{DR} \vspace{-0.2cm}
\begin{equation}\vspace{-0.2cm}
f_{\beta_{m}^{(l)}}\left(b\right)=\frac{b}{\sigma^{4}}K_{0}\left(\frac{b}{\sigma^{2}}\right),b>0.\label{eq:PDFDR}
\end{equation} 
By exploiting \cite[Eq. (03.04.26.0008.01)]{bessel}, the PDF of $\beta_{m}^{(l)}$ can be rewritten as \vspace{-0.2cm} 
\begin{align} \vspace{-0.2cm}
f_{\beta_{m}^{(l)}}\left(b\right) & =\frac{b}{2\sigma^{4}}G_{0,2}^{2,0}\left(\frac{b^{2}}{4\sigma^{4}}\left\vert \begin{array}{c}
-;-\\
0,0;-
\end{array}\right.\right) \nonumber \\
 & =\frac{b}{2\sigma^{4}}\frac{1}{2\pi j}\int_{\mathcal{C}}\left(\frac{b^{2}}{4\sigma^{4}}\right)^{-s}\Gamma^{2}\left(s\right)ds\nonumber \\
 & =\frac{1}{\sigma^{2}}\frac{1}{2\pi j}\int_{\mathcal{C}}\left(\frac{b}{2\sigma^{2}}\right)^{-2s+1}\Gamma^{2}\left(s\right)ds,
\end{align}
where $\mathcal{C}$ is an appropriate complex contour ensuring the convergence of the above Mellin-Barnes integral (e.g. $[1/2-j\infty,1/2+j\infty]$), and $j=\sqrt{-1}$.
\begin{figure*}[ht]
\setcounter{equation}{54} 
\begin{align}\label{mu3a}
\mu_{3} & =\mathbb{E}\left[\left(\sum_{i=1}^{M}\beta_{i}^{(l)}\right)^{3}\right]=\left\{ \begin{array}{l}
\underset{\frac{9\pi M\sigma^{6}}{2}}{\underbrace{\sum_{i=1}^{M}\mathbb{E}\left[(\beta_{i}^{(l)})^{3}\right]}}+\underset{6\pi M(M-1)\sigma^{6}}{\underbrace{\underset{M(M-1)}{3\underbrace{\sum_{i=1}^{M}\sum_{\substack{\ell=1\\
\ell\neq i
}
}^{M}}}\underset{\frac{\pi}{2}\sigma^{2}\times4\sigma^{4}}{\underbrace{\mathbb{E}\left[\beta_{i}^{(l)}\right]\mathbb{E}\left[(\beta_{\ell}^{(l)})^{2}\right]}}}}+\underset{M(M-1)(M-2)\left(\frac{\pi}{2}x\right)^{3}\sigma^{6}}{\underbrace{6\underset{\binom{M}{3}}{\underbrace{\sum_{i=1}^{M-2}\sum_{\ell=i+1}^{M-1}\sum_{k=\ell+1}^{M}}}\underset{\left(\frac{\pi}{2}\sigma^{2}\right)^{3}}{\underbrace{\mathbb{E}\left[\beta_{i}^{(l)}\right]\mathbb{E}\left[\beta_{\ell}^{(l)}\right]\mathbb{E}\left[\beta_{k}^{(l)}\right]}}}} \\ \;\;\;\;\;\;\;\;\;\;\;\;\;\;\;\;\;\;\;\;\;\;\;\;\;\;\;\;\;\;\;\;\;\;\;\;\;\;\;\;\;\;\;\;\;\;\;\;\;\;\;\;\;\;\;\;\;\;\;\;\;\;\;\;\;\;\;\;\;\;\;\;\;\;\;\;\;\; \;\;\;\;\;\;\;\;\;\;\;\;\;\;\;\;\;\;\;\;\;\;\;\;\;\;\;\;\;\;\;\;\;\;\;\;\;\;\;\;\;\;\;\;\;\;\;\;\;\;\;\; ,M\geq3 \\
9\pi\sigma^{6}+3\times2\times\frac{\pi}{2}\sigma^{2}\times4\sigma^{4}=21\pi\sigma^{6},M=2\\
\frac{9\pi\sigma^{6}}{2},M=1
\end{array}\right.
\end{align}
\hrulefill
\vspace*{4pt}
\end{figure*}
\begin{figure*}[ht]
\setcounter{equation}{55} 
\begin{equation}\label{mu4a}
\mu_{4}=\mathbb{E}\left[\left(\sum_{i=1}^{M}\beta_{i}^{(l)}\right)^{4}\right]=\left\{ \begin{array}{l}
\underset{64M\sigma^{8}}{\underbrace{\sum_{i=1}^{M}\mathbb{E}\left[(\beta_{i}^{(l)})^{4}\right]}}+\underset{6\times\binom{M}{2}\times\left(4\sigma^{4}\right)^{2}}{\underbrace{6\sum_{i=1}^{M-1}\sum_{j=i+1}^{M}\mathbb{E}\left[(\beta_{i}^{(l)})^{2}(\beta_{j}^{(l)})^{2}\right]}}+\underset{4\times M\times\left(M-1\right)\times\left(\frac{\pi}{2}\sigma^{2}\right)\times\frac{9\pi\sigma^{6}}{2}}{\underbrace{4\sum_{i=1}^{M}\sum_{\underset{j\neq i}{j=1}}^{M}\mathbb{E}\left[(\beta_{i}^{(l)})^{3}\beta_{j}^{(l)}\right]}} \\ \;\;\;\;\;\;\;\;\;\;\;\;\;\;\;\;\;\;\;\;\;\;\;\;\;\; +\underset{12M\times\frac{\left(M-1\right)(M-2)}{2}\left(\frac{\pi}{2}\sigma^{2}\right)^{2}\times4\sigma^{4}}{\underbrace{12\sum_{i=1}^{M}\sum_{\underset{j\neq i}{j=1}}^{M}\sum_{\underset{k\neq i}{k>j}}^{M}\mathbb{E}\left[(\beta_{i}^{(l)})^{2}\beta_{j}^{(l)}\beta_{k}^{(l)}\right]}} +\underset{24\times\binom{M}{4}\times\left(\frac{\pi}{2}\sigma^{2}\right)^{4}}{\underbrace{24\sum_{i<j<k<p}\mathbb{E}\left[\beta_{i}^{(l)}\beta_{j}^{(l)}\beta_{k}^{(l)}\beta_{p}^{(l)}\right]}},M\geq4\\
3\times64\sigma^{8}+\underset{6\times3\times\left(4\sigma^{4}\right)^{2}}{\underbrace{6\sum_{i=1}^{2}\sum_{j=i+1}^{3}\mathbb{E}\left[(\beta_{i}^{(l)})^{2}(\beta_{j}^{(l)})^{2}\right]}}+\underset{4\times6\times\frac{\pi}{2}\sigma^{2}\times\frac{9\pi\sigma^{6}}{2}}{\underbrace{4\sum_{i=1}^{3}\sum_{\underset{j\neq i}{j=1}}^{3}\mathbb{E}\left[(\beta_{i}^{(l)})^{3}\beta_{j}^{(l)}\right]}}\\ \;\;\;\;\;\;\;\;\;\;\;\;\;\;\;\;\;\;\;\;\;\;\;\; +\underset{12\times3\times\left(\frac{\pi}{2}\sigma^{2}\right)^{2}\times4\sigma^{4}}{\underbrace{12\left(\mathbb{E}\left[(\beta_{1}^{(l)})^{2}\beta_{2}^{(l)}\beta_{3}^{(l)}\right]+\mathbb{E}\left[(\beta_{2}^{(l)})^{2}\beta_{1}^{(l)}\beta_{3}^{(l)}\right]+\mathbb{E}\left[(\beta_{3}^{(l)})^{2}\beta_{1}^{(l)}\beta_{2}^{(l)}\right]\right)}},M=3\\
2\times64\sigma^{8}+\underset{4\times2\times\frac{\pi}{2}\sigma^{2}\times\frac{9\pi\sigma^{6}}{2}}{\underbrace{4\left(\mathbb{E}\left[\beta_{1}^{(l)}\right]\mathbb{E}\left[(\beta_{2}^{(l)})^{3}\right]+\mathbb{E}\left[\beta_{2}^{(l)}\right]\mathbb{E}\left[(\beta_{1}^{(l)})^{3}\right]\right)}}+\underset{6\times\left(4\sigma^{4}\right)^{2}}{\underbrace{6\mathbb{E}\left[(\beta_{1}^{(l)})^{2}\right]\mathbb{E}\left[(\beta_{2}^{(l)})^{2}\right]}},M=2\\
64\sigma^{8},M=1
\end{array}\right.
\end{equation}
\hrulefill
\vspace*{4pt}
\end{figure*}
\setcounter{equation}{47} 
Using the change of variable approach, $t=2s-1$, one can see \vspace{-0.2cm}
\begin{align}\vspace{-0.2cm}
f_{\beta_{m}^{(l)}}\left(b\right) & =\frac{1}{2\sigma^{2}}\frac{1}{2\pi j}\int_{\mathcal{C}}\left(\frac{b}{2\sigma^{2}}\right)^{-t}\Gamma^{2}\left(\frac{t+1}{2}\right)dt\nonumber \\
 & =\frac{1}{2\sigma^{2}}H_{0,2}^{2,0}\left(\frac{b}{2\sigma^{2}}\left\vert \begin{array}{c}
-;-\\
\left(\frac{1}{2},\frac{1}{2}\right),\left(\frac{1}{2},\frac{1}{2}\right);-
\end{array}\right.\right),
\end{align}
where $H^{m,n}_{p,q}\left ( .\mid. \right )$ denotes the Fox's H-function \cite{Fox}. By recalling that the summation of $M$ H-distributions can be tightly approximated by an H-distribution \cite{Faissal}, the PDF of $q_{l} = \sum_{m=1}^{M}\beta_{m}^{(l)}$ can be accurately approximated by a Fox's H-function, relying on the moment-based density approximants method \cite{Faissal}. The resultant Fox's H-function can be represented as a Meijer's G-function, as depicted in Proposition \ref{lemma:pdf}. The tightness of the approximation relies on the number of considered moments. In this work, we use the first four moments, (i.e., $i=1,2,3,4$), which provides a linear system with four equations and four unknown variables ($\mu_i, i \leq 4$). In the following, we provide detailed steps for the derivation of these moments.\vspace{-0.2cm}
\subsection{Moments derivation, $\mu_{i}$, for $1 \leq i \leq 4$}
To obtain the moments of $q_{l}$, we initially derive the first four moments of $\beta_{m}^{(l)}$, for $1 \leq m \leq 4$, as follows \vspace{-0.2cm}
\begin{align}\label{m1} \vspace{-0.2cm}
\mu_{1}^{(m)} & =\mathbb{E}\left[\left|h_{m}\right|\left|g_{m,l}\right|\right]\nonumber \\
 & =\left(\sigma\sqrt{\frac{\pi}{2}}\right)^{2}=\frac{\pi}{2}\sigma^{2},
\end{align}
\begin{align} \vspace{-0.2cm}
\mu_{2}^{(m)} & =\mathbb{E}\left[\left|h_{m}\right|^{2}\left|g_{m,l}\right|^{2}\right]\nonumber \\
 & =\left(2\sigma^{2}\right)^{2}=4\sigma^{4},
\end{align}
\begin{align} \vspace{-0.2cm}
\mu_{3}^{(m)} & =\mathbb{E}\left[\left|h_{m}\right|^{3}\left|g_{m,l}\right|^{3}\right]\nonumber \\
 & =\left(3\sigma^{3}\sqrt{\frac{\pi}{2}}\right)^{2}=\frac{9\pi\sigma^{6}}{2},
\end{align}
and \vspace{-0.2cm}
\begin{align}\label{m4}
\mu_{4}^{(m)} & =\mathbb{E}\left[\left|h_{m}\right|^{4}\left|g_{m,l}\right|^{4}\right]\nonumber \\
 & =\left(8\sigma^{4}\right)^{2}=64\sigma^{8}.\vspace{-0.2cm}
\end{align}
Using \eqref{m1}-\eqref{m4} and employing the multinomial theorem, the first four moments of $q_{l}$ can be evaluated as \vspace{-0.2cm}
\begin{align} \vspace{-0.2cm}
\mu_{1} & =\sum_{m=1}^{M}\mu_{1}^{(m)}=\frac{M\pi\sigma^{2}}{2}
\end{align}
and
\begin{align}\label{eq:mu2}\vspace{-0.2cm}
\mu_{2} & =\mathbb{E}\left[\left(\sum_{m=1}^{M}\beta_{m}^{(l)}\right)^{2}\right]\nonumber \\
 & =\underset{4M\sigma^{4}}{\underbrace{\sum_{m=1}^{M}\mathbb{E}\left[(\beta_{m}^{(l)})^{2}\right]}}+2\sum_{m=1}^{M-1}\sum_{j=m+1}^{M}\underset{\left(\frac{\pi}{2}\sigma^{2}\right)^{2}}{\underbrace{\mathbb{E}\left[\beta_{m}^{(l)}\right]\mathbb{E}\left[\beta_{j}^{(l)}\right]}}\nonumber \\
 & =4M\sigma^{4}+M(M-1)\sigma^{4}\frac{\pi^{2}}{4}.
\end{align} 
Also, $\mu_{3}$ and $\mu_{4}$ are provided in \eqref{mu3a} and \eqref{mu4a}, respectively, given at the top of next page. By performing some algebraic operations, \eqref{eq:mu2}-\eqref{mu4a} can be reduced as in \eqref{m2a}-\eqref{mu4}, respectively. This concludes the proof of Proposition 1.

\bibliographystyle{IEEEtran}
\bibliography{references}

\balance

\end{document}